\newtheorem{theorem}{Theorem}
\newtheorem{lemma}[theorem]{Lemma}
\newcommand{\mixedProcesses}{{\mathcal{M}_0}}
\newcommand{\classicalProcesses}{\mathcal{C}}
\newcommand{\Keyword}[1]{\textsf{\upshape\small #1}\xspace}
\newcommand{\truek}{\Keyword{true}}
\newcommand{\falsek}{\Keyword{false}}
\newcommand{\ifk}{\Keyword{if}}
\newcommand{\thenk}{\Keyword{then}}
\newcommand{\elsek}{\Keyword{else}}
\newcommand{\PAR}{\mid}
\newcommand{\INACT}{\mathbf 0}
\newcommand{\BRANCHP}[4]{#1^{#2}{#3}.{#4}}
\newcommand{\CHOOSE}[4]{{#1}{#2}\sum_{#3}{#4}}
\newcommand{\SENDn}[2]{{#1}!{#2}}
\newcommand{\SEND}[2]{\SENDn{#1}{#2}.}
\newcommand{\RECEIVE}[2]{{#1}?{#2}.}
\newcommand{\SELECTn}[2]{#1 \lhd #2}
\newcommand{\SELECT}[2]{\SELECTn{#1}{#2}.}
\newcommand{\IF}[2]{\ifk\:#1\:\thenk\:#2\:\elsek\:}
\newcommand{\NR}[1]{(\nu #1)}
\newcommand{\TBRANCH}[4]{{#1}\rhd\{{#2}\colon{#3}\}_{#4}}
\newcommand{\sendp}{\SEND xvP}
\newcommand{\selp}{\SELECT xlP}
\newcommand{\branchp}{\TBRANCH{x}{l_i}{P_i}{i\in I}}    
\newcommand{\ifp}{\IF vPQ}
\newcommand{\rmap}[1]{\llparenthesis{#1}\rrparenthesis}
\newcommand{\translation}{\rmap{\cdot}: \mixedProcesses\longrightarrow \classicalProcesses}
\newcommand{\fragr}{\sum_{j\in J} \BRANCHP{l_i}{!}{v_{ij}}{P_{ij}}}
\newcommand{\frags}{\sum_{k\in K} \BRANCHP{l_i}{?}{y_{ik}}{P'_{ik}}}
\newcommand{\osred}{\rightarrow}                
\newcommand{\msred}{\Rightarrow}                
\newcommand{\transred}{\Rightarrow}                 
\newcommand{\labred}[1]{\stackrel{#1}{\longrightarrow}}
\newcommand{\subs}[2]{[{#1}/{#2}]}
\newcommand{\NDchoice}{\operatorname{NDChoice}}
\newcommand{\rulename}[1]{[\text{\sc #1}]\xspace}
\newcommand{\typerulename}[1]{\rulename{T-{#1}}}
\newcommand{\ttrue}{\typerulename{True}}
\newcommand{\tfalse}{\typerulename{False}}
\newcommand{\tunit}{\typerulename{Unit}}
\newcommand{\tvar}{\typerulename{Var}}
\newcommand{\tout}{\typerulename{Out}}
\newcommand{\ttout}{\typerulename{TOut}}
\newcommand{\tin}{\typerulename{In}}
\newcommand{\ttin}{\typerulename{TIn}}
\newcommand{\tinact}{\typerulename{Inact}}
\newcommand{\tpar}{\typerulename{Par}}
\newcommand{\tif}{\typerulename{If}}
\newcommand{\tres}{\typerulename{Res}}
\newcommand{\tbranch}{\typerulename{Branch}}
\newcommand{\tselect}{\typerulename{Sel}}
\newcommand{\tsubt}{\typerulename{Subt}}
\newcommand{\tchoice}{\typerulename{Choice}}
\newcommand{\reductionrulename}[1]{\rulename{R-{#1}}}
\newcommand{\rlincom}{\reductionrulename{LinCom}}
\newcommand{\runcom}{\reductionrulename{UnCom}}
\newcommand{\rlinlin}{\reductionrulename{LinLin}}
\newcommand{\rlinun}{\reductionrulename{LinUn}}
\newcommand{\runlin}{\reductionrulename{UnLin}}
\newcommand{\runun}{\reductionrulename{UnUn}}
\newcommand{\rcase}{\reductionrulename{Case}}
\newcommand{\rres}{\reductionrulename{Res}}
\newcommand{\rift}{\reductionrulename{IfT}}   
\newcommand{\riff}{\reductionrulename{IfF}}
\newcommand{\rstruct}{\reductionrulename{Struct}}
\newcommand{\rpar}{\reductionrulename{Par}}
\newcommand{\subt}{<:}
\newcommand{\branch}{\&}
\newcommand{\sel}{\!\oplus\!}
\newcommand{\un}{\Keyword{un}}
\newcommand{\lin}{\Keyword{lin}}
\newcommand{\Oplus}{\!\oplus\!} 
\newcommand{\typeconst}[1]{\Keyword{#1}}
\newcommand{\bool}{\typeconst{bool}}
\newcommand{\unit}{\typeconst{unit}}
\newcommand{\End}{\typeconst{end}}
\newcommand{\CHOICE}[4]{{#1}{#2}\{{#4}\}_{#3}}
\newcommand{\BRANCH}[4]{\BRANCHP{#1}{#2}{#3}{#4}}
\newcommand{\REC}[2]{\mu{#1}.{#2}}
\newcommand{\OUTn}[1]{!{#1}}
\newcommand{\INn}[1]{?{#1}}
\newcommand{\OUT}[1]{\OUTn{#1}.}
\newcommand{\IN}[1]{\INn{#1}.}
\newcommand{\rcdt}{\{l_i\colon T_i\}_{i\in I}}
\newcommand{\brancht}[1][\alpha]{\branch\rcdt}
\newcommand{\selectt}{\sel\rcdt}
\newcommand{\rect}{\REC aT}
\newcommand{\LIN}{\lin\,} 
\newcommand{\UN}{\un\,} 
\newcommand{\Empty}{\cdot} 
\newcommand{\csplit}{\circ}
\newcommand{\cupdate}{+}
\newcommand{\tequiv}{\approx}
\newcommand{\areEquiv}[3][]{{#2} \tequiv {#3}}
\newcommand{\areDualP}[2]{{#1}\:\bot\:{#2}} 
\newcommand{\isSubt}[3][]{{#2} \subt {#3}}
\newcommand{\areDual}[3][]{{#2}\:\bot\:{#3}}
\newcommand{\isValue}[3][\Gamma]{{#1} \vdash {#2}\colon{#3}}
\newcommand{\isProc}[2][\Gamma]{{#1} \vdash {#2}}
\newcommand{\isBranch}[3][\Gamma]{{#1} \vdash {#2} \colon {#3}}
\newcommand{\barb}[2][P]{{#1}\downarrow_{{#2}}}
\newcommand{\wbarb}[2][P]{{#1}\Downarrow_{#2}}
\newcommand{\grmeq}{\; ::= \;}
\newcommand{\grmor}{\; \mid \;}
\newcommand\Small{\small}
\definecolor{darkviolet}{rgb}{0.5,0,0.4}
\definecolor{darkgreen}{rgb}{0,0.4,0.2}
\definecolor{darkblue}{rgb}{0.1,0.1,0.9}
\definecolor{darkgrey}{rgb}{0.5,0.5,0.5}
\definecolor{lightblue}{rgb}{0.4,0.4,1}
\lstdefinestyle{eclipse}{
  breaklines=true,
  basicstyle=\sffamily\Small,
  emphstyle=\color{red}\bfseries,
  keywordstyle=\color{darkviolet}\bfseries,
  commentstyle=\color{darkgreen},
  stringstyle=\color{darkblue},
  numberstyle=\color{darkgrey},
  emphstyle=\color{red},
  morecomment=[s][\color{lightblue}]{/**}{*/},
  escapeinside={@}{@},
  showstringspaces=false,
  tabsize=2
}
\lstdefinelanguage{mixedsessions}{
  style=eclipse,
  morekeywords=[1]{unit,bool,int,end,rec,lin,un,true,false,if,then,else,def,new,select,case,type,of},
  literate=
    {inact}{\bfseries 0}1
    {|>}{$\rhd$}1
    {ell}{$\ell$}1
    {->}{$\rightarrow$}1
    {++}{$\oplus$}1
    {_1}{{$_1$}}1
    {_2}{{$_2$}}1
    {_3}{{$_3$}}1
    {_4}{{$_4$}}1,
  sensitive=true,
  breaklines=true,
  morecomment=[l]{--},
  morecomment=[l]{//},
}
\title{Mixed Sessions: the Other Side of the Tape}
\author{
  Filipe Casal
  \qquad\qquad
  Andreia Mordido
  \qquad\qquad
  Vasco T.\ Vasconcelos
\institute{LASIGE, Faculdade de Ciências, Universidade de Lisboa, Portugal}
\email{
  \hspace{-2.5em}
  fmrcasal@fc.ul.pt
  \qquad \quad\;
  afmordido@fc.ul.pt
  \qquad \qquad\quad
  vmvasconcelos@fc.ul.pt
}}
\begin{document}
\maketitle

\begin{abstract}
  Vasconcelos et al.~\cite{ESOP2020} introduced side A of the tape:
  there is an encoding of classical sessions into mixed sessions. Here
  we present side B: there is translation of (a subset of) mixed
  sessions into classical sessions. We prove that the translation is a
  minimal encoding, according to the criteria put forward by Kouzapas
  et al.~\cite{DBLP:journals/iandc/KouzapasPY19}.
\end{abstract}



\section{Classical Sessions, Mixed Sessions}
\label{sec:intro}

Mixed sessions were introduced by Vasconcelos et al.~\cite{ESOP2020}
as an extension of classical session
types~\cite{DBLP:journals/acta/GayH05,DBLP:conf/esop/HondaVK98,DBLP:journals/iandc/Vasconcelos12}.
They form an interesting point in the design space of session-typed
systems: an extremely concise process calculus (four constructors only)
that allows the natural expression of algorithms quite cumbersome to
write in classical sessions.
The original paper on mixed sessions~\cite{ESOP2020} shows that there
is an encoding of classical sessions into mixed sessions. This
abstract shows that the converse is also true for a fragment of mixed
sessions.

A translation of mixed sessions into classical sessions would allow to
leverage the tools available for the latter: one could program in
mixed sessions, translate the source code into classical sessions,
check the validity of the source code against the type system for the
target language, and run the original program under an interpreter for
classical sessions (SePi~\cite{DBLP:conf/sefm/FrancoV13}, for
example). A mixed-to-classical encoding would further allow a better
understanding of the relative expressiveness of the two languages.

%


Processes in classical binary
sessions~\cite{DBLP:journals/acta/GayH05,DBLP:conf/concur/Honda93,DBLP:conf/esop/HondaVK98,DBLP:conf/parle/TakeuchiHK94}
(here we follow the formulation
in~\cite{DBLP:journals/iandc/Vasconcelos12}) communicate by exchanging
messages on bidirectional channels. We introduce classical sessions by
means of a few examples. Each channel is denoted by its two ends and
introduced in a process \lstinline|P| as
\lstinline|(new xy)P|. Writing a value \lstinline|v| on channel end
\lstinline|x| and continuing as \lstinline|P| is written as
\lstinline|x!v.P|. Reading a value from a channel end \lstinline|y|,
binding it to variable \lstinline|z| and continuing as \lstinline|Q|
is written as \lstinline|y?z.Q|. When the two processes get together
under a new binder that ties together the two ends of the channel,
such as in
\begin{lstlisting}
    (new xy) x!v.P | y?z.Q
\end{lstlisting}
value \lstinline|v| is communicated from the \lstinline|x| channel end
to the \lstinline|y| end. The result is process
\lstinline@(new xy) P | Q[v/z]@, where notation \lstinline|Q[v/z]|
denotes the result of replacing \lstinline|v| for \lstinline|z| in
\lstinline|Q|.

Processes may also communicate by offering and selecting options in
choices. The different choices are denoted by labels, \lstinline|ell|
and \lstinline|m| for example. To select choice \lstinline|ell| on
channel end \lstinline|x| and continue as \lstinline|P| we write
\lstinline|x select ell.P|. To offer a collection of options at
channel end \lstinline|y| and continue with appropriate continuations
\lstinline|Q| and \lstinline|R|, we write
\lstinline|case y of {ell -> Q, m -> R}|. When \lstinline|select| and
\lstinline|case| processes are put together under a \lstinline|new|
that binds together the two ends of a channel, such as in
\begin{lstlisting}
    (new xy) x select ell.P | case y of {ell -> Q, m -> R}
\end{lstlisting}
branch \lstinline|Q| is selected in the \lstinline|case| process. The
result is the process \lstinline@(new xy) P | Q@. Selecting a choice
is called an internal choice, offering a collection of choices is
called an external choice.
We thus see that classical sessions comprise four atomic interaction
primitives. Furthermore, choices are directional in the sense that one
side offers a collection of possibilities, the other selects one of them.

To account for unbounded behavior classical sessions count with
replication: an input process that yields a new copy of itself after
reduction, written \lstinline|y*?z.Q|. A process of the form
\begin{lstlisting}
    (new xy) x!v.P | y*?z.Q
\end{lstlisting}
reduces to \lstinline@(new xy) P | Q[v/z] | y*?z.Q@. If we use the
\lstinline|lin| prefix to denote an ephemeral process and the
\lstinline|un| prefix to denote a persistent process, an alternative
syntax for the above process is \lstinline@(new xy) lin x!v.P | un y?z.Q@.


Mixed sessions blur the distinction between internal and external
choice.  Under a unified language construct---mixed choice---processes
may non-deterministically select one choice from a multiset of output
choices, or branch on one choice, again, from a multiset of possible
input choices.
Together with an output choice, a value is (atomically) sent; together
with an input choice, a value is (again, atomically) received,
following various proposals in the literature~\cite{DBLP:conf/concur/DemangeonH11,DBLP:journals/iandc/Sangiorgi98,DBLP:conf/ecoop/Vasconcelos94}.
The net effect is that the four common operations on session
types---output, input, selection, and branching---are effectively
collapsed into one: mixed choice.
Mixed choices can be labelled as ephemeral (linear, consumed by
reduction) or persistent (unrestricted, surviving reduction),
following conventional versus replicated inputs in some versions of
the pi-calculus~\cite{DBLP:journals/mscs/Milner92}.
Hence, in order to obtain a core calculus, all we have to add is name
restriction, parallel composition, and inaction (the terminated
process), all standard in the pi-calculus.

We introduce mixed sessions by means of a few examples.
Processes communicate by offering/selecting choices with the same label and opposite
polarities.

\begin{lstlisting}
    (new xy) lin x (m!3.P + n?z.Q) | lin y (m?w.R + n!5.S + p!7.T)
\end{lstlisting}
The above processes communicate over the channel with ends named
\lstinline|x| and \lstinline|y| and
reduce in one step along label \lstinline|m| to
\lstinline@(new xy) P | R[3/w]@ or along label \lstinline|n| to
\lstinline@(new xy) Q[5/z] | S@.
%

Non-determinism in mixed sessions can be further achieved by allowing
duplicated labels in choices. An example in which a 3 or a 5 is
non-deterministically sent over the channel is
\begin{lstlisting}
    (new xy) lin x (m!3.P + m!5.Q) | lin y (m?z.R)
\end{lstlisting}
This process reduces in one step to either \lstinline*(new xy) P | R[3/z]* or
\lstinline*(new xy) Q | R[5/z]*.
%
%
%
Unrestricted behavior in choices is achieved by the \lstinline|un|
qualifier in the choice syntax.
\begin{lstlisting}
    (new xy) un x (m!3.P + m!5.P) | un y (m?z.Q)
\end{lstlisting}
This process reduces to itself together with either of the choices
taken,
\par
\begin{minipage}{.4\textwidth}
  \centering
\begin{lstlisting}
(new xy)
un x (m!3.P + m!5.P) |
un y (m?z.Q) |
P | Q[3/z]
\end{lstlisting}
  \label{fig:prob1_6_2}
\end{minipage}%
\begin{minipage}{0.1\textwidth}
  or
\end{minipage}
\begin{minipage}{0.4\textwidth}
  \centering
\begin{lstlisting}
(new xy)
un x (m!3.P + m!5.P) |
un y (m?z.Q) |
P | Q[5/z]
\end{lstlisting}
\end{minipage}

The complete set of definitions for the syntax, operational semantics,
and type system for mixed sessions are in appendix,
Figures~\ref{fig:mixed-sessions} to~\ref{fig:mixed-sessions3}. For
technical details and main results, we direct the reader to
reference~\cite{ESOP2020}.
The complete set of definitions for the syntax, operational semantics,
and type system for classical sessions are in appendix,
Figure~\ref{fig:classical-sessions}. For further details, we refer the
reader to
references~\cite{DBLP:journals/iandc/Vasconcelos12,ESOP2020}.

\section{Mixed Sessions as Classical Sessions}
\label{sec:embedding}

This section shows that a subset of the language of mixed sessions can
be embedded in that of classical sessions.
We restrict our attention to choices that reduce against choices with
the same qualifier, that is, we do not consider the case where an
ephemeral ($\lin$) process reduces against a persistent ($\un$)
one. For this reason, we assume that a process and its type always
have the same $\lin/\un$ qualifier.

One of the novelties in mixed sessions is the possible presence of
duplicated label-polarity pairs in choices. This introduces a form of
non-determinism that can be easily captured in classical sessions.
The $\mathsf{NDchoice}$ classical session process creates a race
condition on a new channel with endpoints $s,t$ featuring multiple
selections on the $s$ endpoint, for only one branch on the $t$
endpoint. This guarantees that exactly one of the branches is
non-deterministically selected. The remaining selections must eventually
be garbage collected.
We assume that $\prod_{1\le i\le n} Q_i$ denotes the process
$Q_1\PAR\dots\PAR Q_n$ for $n>0$, and that $\Pi$ binds tighter than
the parallel composition operator.
\begin{equation*}
\NDchoice\{P_i\}_{i\in I} = \NR{st} \left(\prod _{i\in I}
    \SELECT{s}{l_i}{\INACT}  \PAR
    \TBRANCH{t}{l_i}{P_i}{i\in I}\right)
\end{equation*}

The type $S$ of channel end $s$ is of the form
$\CHOICE \un \oplus {i\in I} {l_i\colon S}$, an equation that can be
solved by type $\REC a {\CHOICE \un \oplus {i\in I} {l_i\colon a}}$,
and which SePi abbreviates to $*\oplus \{l_i\}_{i\in I}$. The
qualifier must be $\un$ because $s$ occurs in multiple threads in
$\NDchoice$; recursion arises because of the typing rules for
processes reading or writing in unrestricted channels.

Equipped with $\NDchoice$ we describe the translation of mixed
sessions to classical sessions via variants of the examples in
Section~\ref{sec:intro}. All examples fully type check and run in
SePi~\cite{DBLP:conf/sefm/FrancoV13}.
To handle duplicated label-polarity pairs in choices, we organize
choice processes by label-polarity fragments. Each such fragment
represents a part of a choice operation where all possible outcomes
have the same label and polarity. When a reduction occurs, one of the
branches is taken, non-deterministically, using the $\NDchoice$
operator.  After a non-deterministic choice of the branch, and
depending on the polarity of the fragment, the process continues by
either writing on or reading from the original channel.

The translation of choice processes is guided by their types. For each
choice we need to know its qualifier ($\lin, \un$) and its view
($\oplus, \&$), and this information is present in types alone.

\begin{figure}[t]
  \centering
  \begin{minipage}{.45\textwidth}
\begin{lstlisting}
new x y: lin&{m: !int.end,
              n: ?bool.end}

// lin x (m!3.0 + n?w.0)
case x of
    m -> new s_1 t_1: *+{ell}
         s_1 select ell |
         case t_1 of
             ell -> x!3
    n -> new s_2 t_2: *+{ell}
         s_2 select ell |
         case t_2 of
             ell -> x?w
 \end{lstlisting}
  \end{minipage}%
  \begin{minipage}{0.1\textwidth}
  $\PAR$
  \end{minipage}
  \begin{minipage}{0.4\textwidth}
      \centering
      \begin{lstlisting}


// lin y (m?z.0)
new s_3 t_3: *+{ell}
s_3 select ell |
case t_3 of
    ell -> y select m.
         new s_4 t_4: *+{ell}
         s_4 select ell |
         case t_4 of
             ell -> y?z
    \end{lstlisting}
  \end{minipage}
  \caption{Translation of \lstinline*(new xy)(lin x (m!3.0 + n?w.0) | lin y (m?z.0))*}
  \label{fig:example1_translation}
\end{figure}


%
Figure~\ref{fig:example1_translation} shows the translation of the
mixed process
\lstinline*(new xy)(lin x (m!3.0 + n?w.0) | lin y (m?z.0))*, where
\lstinline|x| is of type \lstinline|lin&{m!int.end, n?bool.end}|.
The corresponding type in classical sessions is
\lstinline|lin&{m:!int.end, n:?bool.end}|, which should not come as a
surprise.
Because channel end \lstinline|x| is of an external choice type
(\lstinline|&|), the choice on \lstinline|x| is encoded as a
\lstinline|case| process. The other end of the channel, \lstinline|y|,
is typed as an internal choice (\lstinline|++|) and is hence
translated as a \lstinline|select| process.
Occurrences of the $\NDchoice$ process appear in a degenerate form,
always applied to a single branch. We have four of them: three for
each of the branches in \lstinline|case| processes
(\lstinline|s_1t_1|, \lstinline|s_2t_2|, and \lstinline|s_4t_4)| and
one for the external choice in the mixed session process
(\lstinline|s_3t_3|).

In general, an external choice is translated into a classical
branching (\lstinline|case|) over the unique labels of the fragments
of the process, but where the polarity of each label is inverted.  The
internal choice, in turn, is translated as (possibly nondeterministic
collection of) classical \lstinline|select| process but keeps the
label polarity.
This preserves the behavior of the original process: in mixed choices,
a reduction occurs when a branch $\BRANCHP l!vP$ matches another
branch $\BRANCHP l?zQ$ with the same label but with dual polarity
($l^!$ against $l^?$), while in a classical session the labels alone
must match ($l$ against $l$). Needless to say, we could have followed
the strategy of dualizing internal choices rather than external.

If we label reduction steps with the names of the channel ends on
which they occur, we can see that, in this case a $\labred{xy}$
reduction step in mixed sessions is mimicked by a long series of
classical reductions, namely
$ \labred{s_3t_3} \labred{xy} \labred{s_1t_1} \labred{s_4t_4}$
$ \labred{xy} $ or
$ \labred{s_3t_3} \labred{xy} \labred{s_4t_4} \labred{s_1t_1}
\labred{xy} $.  
Notice the three reductions to resolve non-determinism
(on $s_it_i$) and the two reductions on $xy$ to encode branching
followed by message passing, an atomic operation in mixed sessions.


\begin{figure}
  \centering
  \begin{minipage}{.4\textwidth}
\begin{lstlisting}
new x y: lin&{m: !int.end}

// lin x (m!3.0 + m!5.0)
case x of
    m -> new s_1 t_1: *+{ell_1,ell_2}
        s_1 select ell_1 |
        s_1 select ell_2 |
        case t_1 of
            ell_1 -> x!3
            ell_2 -> x!5
 \end{lstlisting}
  \end{minipage}%
  \begin{minipage}{0.1\textwidth}
  $\quad\;\;\PAR$
  \end{minipage}
  \begin{minipage}{0.4\textwidth}
      \centering
      \begin{lstlisting}



// lin y (m?z.0)
new s_2 t_2: *+{ell}
s_2 select ell |
case t_2 of
    ell -> y select m.
        new s_3 t_3: *+{ell}
        s_3 select ell |
        case t_3 of
            ell -> y?z
    \end{lstlisting}
  \end{minipage}
  \caption{Translation of \lstinline*(new xy)(lin x (m!3.0 + m!5.0) | lin y (m?z.0))*}
  \label{fig:example2_translation}
\end{figure}


Figure~\ref{fig:example2_translation} shows an example of a mixed
choice process with a duplicated label-polarity pair,
\lstinline|m!|. If we assign to \lstinline|x| type
\lstinline|lin&{m!int}|, then we know that the choice on \lstinline|x|
is encoded as \lstinline|case| and that on \lstinline|y| as
\lstinline|select|.
In this case, the $\NDchoice$ operator is applied in a non-degenerate
manner to decide whether to send the values 3 or 5 on \lstinline|x|
channel end, by means of channel \lstinline|s_1t_1|.
Again we can see that the one step reduction on channel \lstinline|xy|
in the original mixed session process originates a sequence of five
reduction steps in classical sessions, namely
$
\labred{s_2t_2}
\labred{xy}
\labred{s_1t_1}
\labred{s_3t_3}
\labred{xy}
$
or
$
\labred{s_2t_2}
\labred{xy}
\labred{s_3t_3}
\labred{s_1t_1}
\labred{xy}
$.
In this case, however, the computation is non-deterministic: the last
reduction step may carry integer \lstinline|3| or \lstinline|5|.

\begin{figure}[t]
  \centering
  \begin{minipage}{.45\textwidth}
\begin{lstlisting}
type Unr = lin&{m: !integer.end}
new x y: *?Unr

// un x (m!3.0 + m!5.0)
new u_1 v_1: *!()
u_1!() |
v_1*?(). x?a.
case a of
    m -> new s_1 t_1: *+{ell_1,ell_2}
         s_1 select ell_1 |
         s_1 select ell_2 |
         case t_1 of
             ell_1 -> a!3 . u_1!()
             ell_2 -> a!5 . u_1!()
 \end{lstlisting}
  \end{minipage}%
  \begin{minipage}{0.1\textwidth}
  $\quad\;\;\PAR$
  \end{minipage}
  \begin{minipage}{0.4\textwidth}
      \centering
      \begin{lstlisting}


// un y (m?z.0)
new u_2 v_2: *!()
u_2!() |
v_2*?().
new s t: *+{ell}
s select ell |
case t of
    ell -> new a b: Unr
        y!a . b select m .
        new s_2 t_2: *+{ell}
        s_2 select ell |
        case t_2 of
            ell -> b?z . u_2!()
    \end{lstlisting}
  \end{minipage}
  \caption{Translation of \lstinline*(new xy)(un x (m!3.0 + m!5.0) | un y (m?z.0))*}
  \label{fig:example3_translation}
\end{figure}


Figure~\ref{fig:example3_translation} shows the encoding of mixed
choices on unrestricted channels. The mixed choice process is that of
Figure~\ref{fig:example2_translation} only that the two ephemeral
choices (\lstinline|lin|) have been replaced by their persistent
counterparts (\lstinline|un|).
The novelty, in this case, is the loops that have been created around the
\lstinline|case| and the \lstinline|select| process. Loops in
classical sessions can be implemented with a replicated input: a
process of the form \lstinline|v*?x.P| is a persistent process that,
when invoked with a value \lstinline|v| becomes the parallel
composition \lstinline@P[v/x] | v*?x.P@. The general form of the loops
we are interested in are \lstinline@(new uv : *!())(u!() | v*?x.P)@,
where \emph{continue calls} in process \lstinline|P| are of the form
\lstinline|u!()|. The contents of the messages that control the loop
are not of interest and so we use the unit type \lstinline|()|, so
that \lstinline|u| is of type \lstinline|*!()|.
We can easily see the calls \lstinline|u_1!()| and \lstinline|u_2!()|
in the last lines in Figure~\ref{fig:example3_translation}, reinstating
the unrestricted choice process.
In this case, one step reduction in mixed sessions corresponds to
a long sequence of transitions in their encodings.
%


\begin{figure}[t]
  \begin{align*}
    \rmap{\CHOICE \LIN\Oplus {i\in I} {\BRANCH {l_i}\star {S_i}{T_i}}} &=
      \CHOICE{\LIN}{\Oplus}{i\in I}{{{l_i}^\star \colon \lin{\star_i} \rmap{S_i}.\rmap{T_i}}}
    \\
      \rmap{\CHOICE \LIN\& {i\in I} {\BRANCH {l_i}\star {S_i}{T_i}}} &=
      \CHOICE{\LIN}{\&}{i\in I}{{{l_i}^\bullet \colon \lin{\star_i} \rmap{S_i}.\rmap{T_i}}}
      && \text{where } \areDualP{\star_i}\bullet_i
    \\
    \rmap{\CHOICE \UN\Oplus {i\in I} {\BRANCH {l_i}{\star} {S_i} {T_i}}} &=
      \REC b{\un\OUT{(\CHOICE \LIN\Oplus {i\in I} {{l_i}^\star \colon \lin{\star_i}\rmap{S_i}.\End})}} b
    && \text{where }
    T_i \tequiv \CHOICE \UN\Oplus {i\in I} {\BRANCH {l_i}{\star} {S_i} {T_i}}
    \\
    \rmap{\CHOICE \UN\& {i\in I} {\BRANCH {l_i}{\star} {S_i} {T_i}}} &=
    \REC b{\un\IN{(\CHOICE \LIN\& {i\in I} {{l_i}^\bullet \colon \lin{\star_i}\rmap{S_i}.\End})}} b
    && \text{where } \areDualP{\star_i}\bullet_i \text{ and }
    T_i \tequiv \CHOICE \UN\& {i\in I} {\BRANCH {l_i}{\star} {S_i} {T_i}}
\end{align*}
(Homomorphic for $\End$, $\unit$, $\bool$, $\rect$, and $a$)
  \caption{Translating mixed session types to traditional session types}
   \label{fig:embedding}
\end{figure}


We now present translations for types and processes in general.  The
translation of mixed choice session types into classical session types
is in Figure~\ref{fig:embedding}.
In general, the (atomic) branch-communicate nature of mixed session
types, $\{l_i^\star{S_i}\}$, is broken in its two
parts, $\{l_i\colon\star S_i\}$, branch first, communicate after.
In mixed sessions, choice types are labelled by label-polarity
pairs ($l^!$ or $l^?$); in classical session choices are labelled by
labels alone. Because we want the encoding of a label $l^!$ to match
the encoding of $l^?$, we must dualize one of them. We arbitrarily
chose do dualize the labels in the $\&$ type.
The typing rules for classical unrestricted processes of type
$S = \CHOICE \UN\sharp {i\in I} {\BRANCH {l_i}\star {S_i}{T_i}}$
require $T_i$ to be equivalent ($\tequiv$) to $S$ itself. We take
advantage of this restriction when translating $\un$ types.


\begin{figure}[t!]
  \begin{multline*}
    \rmap{\isProc{\lin x\sum_{i\in I} (\fragr + \frags)}} =
    x \rhd \{l_i^? \colon
            \NDchoice\{\SEND{x}{v_{ij}}{\rmap{\isProc[\Gamma_3,
                  x\colon T_i]{P_{ij}}}}\}_{j\in J},
    \\
     l_i^! \colon
      \NDchoice\{\RECEIVE{x}{y_{ik}}{\rmap{\isProc[\Gamma_2\csplit\Gamma_3,x\colon
            T'_i, y_{ik}\colon S'_i]{P'_{ik}}}}\}_{k\in K}
      \}_{i\in I}
  \end{multline*}
where $\Gamma = \Gamma_1 \csplit \Gamma_2  \csplit \Gamma_3$
and
$\isValue[\Gamma_1]{x}{\CHOICE{\lin}{\&}{i\in
          I}{\BRANCH{l_i}{!}{S_i}{T_i}, \BRANCH{l_i}{?}{S'_i}{T'_i}}}$
and
$\isValue[\Gamma_2]{v_{ij}}{S_i}$.
%
  \begin{multline*}
    \rmap{\isProc{\LIN x\sum_{i\in I} (\fragr + \frags)}} =
        \mathsf{NDChoice}\{
    \SELECT{x}{l_i^!}\NDchoice\{\SEND{x}{v_{ij}}{\rmap{\isProc[\Gamma_3,
           x\colon T_i]{P_{ij}}}}\}_{j\in J},
    \\
    \SELECT{x}{l_i^?}\NDchoice\{\RECEIVE{x}{y_{ik}}{\rmap{\isProc[\Gamma_2\csplit\Gamma_3,x\colon
            T'_i, y_{ik}\colon S'_i]{P'_{ik}}}}\}_{k\in K}
            \}_{i\in I}
  \end{multline*}
where $\Gamma = \Gamma_1 \csplit \Gamma_2 \csplit \Gamma_3$
and
$\isValue[\Gamma_1]{x}{\CHOICE{\lin}{\oplus}{i\in
          I}{\BRANCH{l_i}{!}{S_i}{T_i}, \BRANCH{l_i}{?}{S'_i}{T'_i}}}$
and
$\isValue[\Gamma_2]{v_{ij}}{S_i}$.
%
  \begin{multline*}
    \rmap{\isProc{\UN x\sum_{i\in I} (\fragr + \frags)}} = \NR{uv}
    ( \SENDn{u}{()} \PAR\UN\RECEIVE{v}{\_}\RECEIVE{x}{a}
    \\
    a \rhd \{l_i^? \colon
            \NDchoice\{\SEND{a}{v_{ij}}{( \SENDn{u}{()} \PAR {\rmap{\isProc[\Gamma]
            {P_{ij}}}}})\}_{j\in J}
    ,\\
    l_i^! \colon
      \NDchoice\{\RECEIVE{a}{y_{ik}}
      (\SENDn{u}{()} \PAR
      {\rmap{\isProc[\Gamma, y_{ik}\colon S'_i]{P'_{ik}}}})\}_{k\in K}
      \}_{i\in I} ) \qquad\qquad\quad\;\;\,
  \end{multline*}
where $\un(\Gamma)$
and
$\isValue x {\CHOICE{\un}{\&}{i\in I}{\BRANCH{l_i}{!}{S_i}{T_i}, \BRANCH{l_i}{?}{S'_i}{T_i'}}}$
and
$\isValue{v_{ij}}{S_i}$
and
$T_i \tequiv T'_i \tequiv \CHOICE \un \sharp {i\in I} {\BRANCH{l_i}{!}{S_i}{T_i},
  \BRANCH{l_i}{?}{S'_i}{T_i'}}$.
  \begin{multline*}
    \rmap{\isProc{\UN x\sum_{i\in I} (\fragr + \frags)}} = \NR{uv}
          ( \SENDn{u}{()} \PAR \UN\RECEIVE{v}{\_}\\
      \mathsf{NDChoice}\{
     \NR{ab} \SEND{x}{a} \SELECT{b}{l_i^!}\NDchoice\{\SEND{b}{v_{ij}}
     (\SENDn{u}{()} \PAR {\rmap{\isProc[\Gamma
               ]{P_{ij}}})}\}_{j\in J},
\\
            \NR{ab} \SEND{x}{a} \SELECT{b}{l_i^?}
            \NDchoice\{\RECEIVE{b}{y_{ik}}
            (\SENDn{u}{()} \PAR
            {\rmap{\isProc[\Gamma, y_{ik}\colon S_i']{P'_{ik}}})}\}_{k\in K}
            \}_{i\in I} )\quad\;\;\,
 \end{multline*}
where $\un(\Gamma)$
and
$\isValue x {\CHOICE{\un}{\oplus}{i\in I}{\BRANCH{l_i}{!}{S_i}{T_i}, \BRANCH{l_i}{?}{S'_i}{T'_i}}}$
and
$\isValue{v_{ij}}{S_i}$
and
$T_i \tequiv T'_i \tequiv \CHOICE \un \sharp {i\in I} {\BRANCH{l_i}{!}{S_i}{T_i},
  \BRANCH{l_i}{?}{S'_i}{T'_i}}$.
\begin{align*}
  \rmap{\isProc{\NR{xy}P}} =&\, \NR{xy}\rmap{\isProc[\Gamma,x\colon S,y\colon T]{P}}
  &&\text{where } \areDual[\Empty]{S}{T}
  \\
  \rmap{\isProc[\Gamma_1\csplit\Gamma_2]{P_1\PAR P_2}} =&\,
  \rmap{\isProc[\Gamma_1]{P_1}}
  \PAR
  \rmap{\isProc[\Gamma_2]{P_2}}
  \\
  \rmap{\isProc{\INACT}} =&\, \INACT
  \\
  \rmap{\isProc[\Gamma_1\csplit\Gamma_2]{\IF{v}{P_1}P_2}} =&\,
  \IF{v}{\rmap{\isProc[\Gamma_2]{P_1}}}\rmap{\isProc[\Gamma_2]{P_2}}
  &&\text{where } \isValue[\Gamma_1]{v}{\bool}
\end{align*}
  \caption{Translating mixed session processes to classical session processes}
  \label{fig-translation-mixed2classical}
\end{figure}


The translation of mixed choice processes is in
Figure~\ref{fig-translation-mixed2classical}. Since the translation is
guided by the type of the process to be translated, we also provide
the typing context to the translation function, hence the notation
$\rmap{\isProc P}$.
Because label-polarity pairs may be duplicated in choice processes, we
organize such processes in label-polarity fragments,
so that a process of the form
$\CHOOSE qx{i\in I}{\BRANCHP{l_i}{\star_i}{v_i}{P_i}}$ (where
$q::=\lin\mid\un$ and $\star::=!\mid?$) can be written as
$\CHOOSE qx{i\in I}{(\fragr + \frags)}$. Each label-polarity fragment
($l^!_i$ or $l^?_i$) groups together branches with the same label and
the same polarity. Such fragments may be empty for external choices,
for not all label-polarity pairs in an external choice type need to be
covered in the corresponding process (internal choice processes do not
need to cover all choices offered by the external counterpart).
The essence of the translation is discussed in the three examples
above.

We distinguish four cases for choices, according to qualifiers ($\lin$
or $\un$) and views ($\oplus$ or $\&$) in types. In all of them an
$\NDchoice$ process takes care of duplicated label-polarity pairs in
branches.
Internal choice processes feature an extra occurrence of $\NDchoice$
to non-deterministically select between output and input \emph{on the
  same label}.
Notice that external choice must still accept both choices, so that it
is not equipped with an $\NDchoice$.
Finally, unrestricted mixed choices require the encoding of a loop,
accomplished by creating a new channel for the effect ($uv$),
installing a replicated input $\UN\RECEIVE{v}{\_}P$ at one end of the
channel, and invoking the input once to ``start'' the loop and again
at the end of the interaction on channel end $x$. The calls are all
accomplished with processes of the form $\SENDn{u}{()}$. The contents
of the messages are of no interest and so we use the unit value $()$.

Following the encoding for types, the encoding for external choice
processes exchanges the polarities of choice labels: a label $l_i^!$
in mixed sessions is translated into $l_i^?$, and vice-versa, in the
cases for $\lin\&$ and $\un\&$ choices. This allows reduction to
happen in classical sessions, where we require an exact match between
the label of the \lstinline|select| process and that of the
\lstinline|case| process.


\section{A Minimal Encoding}
\label{sec:correspondences}

This section covers typing and operational correspondences; we follow
Kouzapas et al.~\cite{DBLP:journals/iandc/KouzapasPY19} criteria for
typed encodings, and aim at a minimal encoding.

Let $\mathcal{C}$ range over classical processes, and
$\mathcal{M}_0$
range over the fragment of mixed choice processes where $\lin$
processes only reduce against $\lin$ processes, and $\un$ processes
only reduce against $\un$ processes,
i.e., the reduction rules for $\mathcal{M}_0$ are those for
mixed processes, except for \rlinun and \runlin
(Figure~\ref{fig:mixed-sessions}).
The function
$\rmap{\cdot}: \mixedProcesses\longrightarrow \classicalProcesses$ in
Figure~\ref{fig-translation-mixed2classical} denotes a translation from
mixed choice processes in $\mathcal{M}_0$ to classical processes in
$\mathcal{C}$. We overload the notation and denote by $\rmap{\cdot}$
the encoding of both types (Figure~\ref{fig:embedding}) and processes
(Figure~\ref{fig-translation-mixed2classical}).


We start by addressing typing criteria.
The \emph{type preservation} criterion requires that
$\rmap{\operatorname{op}(T_1,\dots,T_n)} =
\operatorname{op}(\rmap{T_1},\dots,\rmap{T_n})$. Our encoding, in
Figure~\ref{fig:embedding}, can be called weakly type preserving in
the sense that we preserve the direction of type operations, but not the
exact type operator. For example, a $\un\oplus$ type is
translated in a $\un!$ type (and $\un\&$ type is translated in
$\un?$). Both $\oplus$ and $!$ can be seen as output types (and $\&$
and $?$ as input), so that direction is preserved.

We now move to \emph{type soundness}, but before we need to be able to
type the $\NDchoice$ operator.

\begin{lemma}
  \label{lem:ndchoice}
  The following is an admissible typing rule for typing $\NDchoice$.
  \begin{equation*}
    \frac{
      \isProc{P_i}
      \quad
      i\in I
    }{
      \isProc{\NDchoice\{P_i\}_{i\in I}}
    }
  \end{equation*}
\end{lemma}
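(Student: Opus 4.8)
The plan is to exhibit an explicit typing derivation in the classical session type system whose leaves are the hypotheses $\isProc{P_i}$, placed at the branches of the $\TBRANCH{t}{l_i}{P_i}{i\in I}$ process. First I would fix the types of the two fresh endpoints, following the discussion preceding the lemma: let $S = \REC a {\CHOICE \un \oplus {i\in I} {l_i\colon a}}$ be the type of $s$ and $T = \REC a {\CHOICE \un \& {i\in I} {l_i\colon a}}$ be the type of $t$. These types are dual, $\areDualP S T$, and both are unrestricted---a choice forced by the fact that $s$ (and hence, by duality, $t$) occurs in several parallel threads inside $\NDchoice\{P_i\}_{i\in I}$.

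With the types fixed, the derivation proceeds by decomposing the syntax of $\NDchoice\{P_i\}_{i\in I}$ from the outside in. I would apply the restriction rule \tres (licensed by $\areDualP S T$) to reduce the goal to typing the body under $\Gamma, s\colon S, t\colon T$. Repeated application of the parallel rule \tpar then splits this context across the $|I|$ select threads and the single branch process: since $S$ and $T$ are unrestricted, they are copied into every component, whereas the linear part of $\Gamma$ is routed in its entirety to the branch process. Each select thread $\SELECT{s}{l_i}{\INACT}$ is typed under a purely unrestricted context; unfolding $S$ to $\CHOICE \un \oplus {i\in I} {l_i\colon S}$, the selection rule \tselect leaves $s$ at type $S$ and the continuation $\INACT$ is discharged by \tinact, which applies precisely because its residual context is unrestricted.

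It then remains to type $\TBRANCH{t}{l_i}{P_i}{i\in I}$ under $\Gamma, s\colon S, t\colon T$. Unfolding $T$ to $\CHOICE \un \& {i\in I} {l_i\colon T}$, the branch rule \tbranch requires each continuation $P_i$ to be typable under the same context with $t$ retained at type $T$, i.e.\ $\isProc[\Gamma, s\colon S, t\colon T]{P_i}$. This is obtained from the hypothesis $\isProc{P_i}$ by weakening with the unrestricted assumptions $s\colon S$ and $t\colon T$, which is admissible exactly because both types are unrestricted. Assembling these sub-derivations yields $\isProc{\NDchoice\{P_i\}_{i\in I}}$.

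The delicate point is making the unrestricted, recursive typing line up. One must check that the selection and branch rules genuinely fold the channel type back onto $S$ (respectively $T$), so that every rule application acts on a well-formed unfolding of the recursive type rather than exhausting it; and one must keep the context discipline honest throughout---unrestricted entries (including $s$ and $t$) shared across all parallel components and discardable at the $\INACT$ leaves, while the linear entries of $\Gamma$ are concentrated in the single branch process where the hypotheses $\isProc{P_i}$ are consumed. Once duality and the recursive unfoldings are settled, the remaining work is the routine bookkeeping of context splitting and weakening.
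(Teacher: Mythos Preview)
Your proposal is correct and is precisely a detailed expansion of the paper's one-line argument, which simply states that the typing derivation of the unfolded $\NDchoice$ term leaves the derivations $\isProc{P_i}$ as open leaves. You have filled in exactly the details the paper leaves implicit: the recursive unrestricted types for $s$ and $t$, the application of \tres, \tpar, \tselect, \tinact, and \tbranch, and the weakening by the unrestricted entries $s\colon S$ and $t\colon T$.
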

\begin{proof}
  The typing derivation of the expansion of $\NDchoice$ leaves open the
  derivations for $\isProc{P_i}$.
\end{proof}

The type soundness theorem for our translation is item~\ref{item:p}
below; the remaining items help in building the main result.

\begin{theorem}[Type Soundness]\
 \begin{enumerate}
 \item\label{item:t} If $\UN T$, then $\UN\rmap T$.
 \item\label{item:g} If $\UN \Gamma$, then $\UN\rmap\Gamma$.
 \item\label{item:s} If $\isSubt ST$, then $\isSubt[\rmap\Theta]{\rmap S}{\rmap T}$
 \item\label{item:v} If $\isValue vT$, then $\isValue[\rmap\Gamma] v {\rmap T}$.
 \item\label{item:p} If $\isProc P$, then $\isProc[\rmap\Gamma]{\rmap{\isProc P}}$.
 \end{enumerate}
\end{theorem}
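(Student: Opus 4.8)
The plan is to establish the five items in order, proving each by induction on the derivation of its hypothesis, with items~\ref{item:t}--\ref{item:v} serving as lemmas that feed the main statement, item~\ref{item:p}. I would prove item~\ref{item:t} by induction on the structure of the mixed session type $T$. The only cases needing attention are the two unrestricted choice types, $\CHOICE\UN\Oplus{i\in I}{\ldots}$ and $\CHOICE\UN\branch{i\in I}{\ldots}$: by Figure~\ref{fig:embedding} they translate to recursive types of the shape $\REC b{\un\OUT{\ldots}}b$ and $\REC b{\un\IN{\ldots}}b$, each guarded by a $\un$ prefix and hence unrestricted. The homomorphic cases ($\End$, $\unit$, $\bool$, $a$) send unrestricted types to unrestricted types directly, and the $\lin$ choice types are vacuously excluded since they are not unrestricted. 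Item~\ref{item:g} is then immediate: apply item~\ref{item:t} to every binding of $\Gamma$.

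Items~\ref{item:s} and~\ref{item:v} are routine inductions. For item~\ref{item:s} I would induct on the subtyping derivation, checking that the translation respects each rule (\sbranchfin, \sselfin, \ssendfin, \srcvfin); because the type translation preserves the \emph{direction} of every operator---mapping $\oplus$-style to output-style and $\branch$-style to input-style while splitting each atomic branch-communicate type into a branch-then-communicate pair---a subtyping relation between mixed types transports to the corresponding relation between their images. Item~\ref{item:v} then follows by induction on the value typing derivation: values are variables and the constants of \ttrue, \tfalse, \tunit, typed by \tvar and friends, possibly closed under subsumption \tsub, and each premise is carried across by items~\ref{item:t}--\ref{item:s}.

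The substance of the theorem is item~\ref{item:p}, which I would prove by induction on the derivation of $\isProc P$. The structural cases---$\INACT$, parallel composition, restriction, and the conditional---are direct, since the translation is homomorphic on them (bottom of Figure~\ref{fig-translation-mixed2classical}): the classical rules \tinact, \tpar, \tres, and \tif split contexts exactly as their mixed counterparts do, so the splits $\Gamma_1\csplit\Gamma_2$ transport verbatim, and duality of the restricted channel is preserved by the type translation via the side condition $\areDualP ST$.

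The four choice cases carry the real work, and I expect the two \emph{unrestricted} ones to be the main obstacle. For each case I would unfold its clause in Figure~\ref{fig-translation-mixed2classical} and construct the classical typing derivation outside-in, discharging every $\NDchoice$ through Lemma~\ref{lem:ndchoice} so that its branches reduce to typing the underlying communications on $x$ (or on the fresh $a,b$ in the $\un$ cases). The care points are: threading the split $\Gamma=\Gamma_1\csplit\Gamma_2\csplit\Gamma_3$ through the \tbranch, \tselect, \tout, and \tin rules so that the channel $x$, each sent value $v_{ij}$, and each continuation are typed in the right component; checking that the polarity-inverted labels $l_i^\bullet$ in the $\branch$ cases match across the encoded $\NDchoice$ channels (where $\areDualP{\star_i}{\bullet_i}$); and, for the $\un$ cases, typing the loop channel $uv$ together with the replicated input $\UN\RECEIVE v{\_}{\ldots}$, then verifying that the equivalences $T_i\tequiv T_i'\tequiv\CHOICE\un\sharp{i\in I}{\ldots}$ supplied by the mixed typing rules are precisely what licenses each continue call $\SENDn u{()}$ to re-enter the loop and what folds the translated type back into its recursive form $\REC b{\ldots}b$.
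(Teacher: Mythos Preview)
Your approach matches the paper's almost exactly: both establish the five items in sequence, with item~\ref{item:g} from item~\ref{item:t}, item~\ref{item:v} from items~\ref{item:g} and~\ref{item:s}, and item~\ref{item:p} from items~\ref{item:g}, \ref{item:v} together with Lemma~\ref{lem:ndchoice}. Your treatment of the four choice cases and the loop machinery is in fact more explicit than what the paper spells out.

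The one place you diverge is proof technique for items~\ref{item:s} and~\ref{item:p}: you propose induction on the derivation, whereas the paper uses \emph{coinduction} on the hypothesis. For item~\ref{item:s} this matters: the subtyping rules in Figure~\ref{fig:mixed-sessions2} are explicitly presented as coinductive, so derivations involving recursive types need not be well-founded and a plain induction on them is not available. You would instead show that $\{(\rmap S,\rmap T)\mid \isSubt ST\}$ is backward-closed under the classical subtyping rules. (Alternatively, if you work with the algorithmic form that threads the memoization set~$\Theta$---which the appearance of $\rmap\Theta$ in the statement already hints at---derivations become finite and induction goes through, but then you must track $\Theta$ explicitly in your argument.) For item~\ref{item:p} the discrepancy is less consequential: process typing derivations are finite in the process structure, so your induction is sound there; the paper's appeal to coinduction presumably reflects the coinductive reasoning needed for the type-level side conditions in the $\un$ choice cases rather than the outer proof shape.
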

\begin{proof}
  \ref{item:t}: By case analysis on $T$ and the fact that types are
  contractive.
  \ref{item:g}: By induction on $\Gamma$ using case \ref{item:t}.
  \ref{item:s}: By coinduction on the hypothesis.
  \ref{item:v}: By rule induction on the hypothesis using items \ref{item:g}
  and \ref{item:s}.
  \ref{item:p}: By coinduction on the hypothesis, using items \ref{item:g}
    and \ref{item:v}, and lemma~\ref{lem:ndchoice}.
\end{proof}

The \emph{syntax preservation} criterion consists of ensuring that
parallel composition is translated into parallel composition and that
name restriction is translated into name restriction, which is
certainly the case with our translation. It further requires the
translation to be name invariant.  Our encoding transforms each
channel end in itself and hence is trivially name invariant.
We conclude that our translation is syntax preserving.


We now address the criteria related to the operational semantics.
We denote by $\msred$ the reflexive and transitive closure of the
reduction relations, $\osred$, in both the source and target
languages.
Sometimes we use subscript $\mixedProcesses$ to denote the reduction
of mixed choice processes and the subscript $\classicalProcesses$ for
the reduction of classical processes, even though it should be clear
from context.
The behavioral
equivalence $\asymp$ for classical sessions we are interested in
extends structural congruence $\equiv$ with the following rule
\begin{equation*}
  \NR{ab}\prod_{i\in I} \SELECT{a}{l_i}{\INACT} \;\asymp\; \INACT.
\end{equation*}
The new rule
allows collecting processes that are left by the encoding of
non-deterministic choice. We call it \emph{extended structural
  congruence}.
The following lemma characterizes the reductions of $\NDchoice$
processes: they reduce to one of the processes that are to be chosen
and leave an inert term $G$.
\begin{lemma}
  \label{lem:ndchoice_red}
  $\NDchoice\{P_i\}_{i\in I} \osred P_k \PAR G \asymp P_k$, for any $k\in I$.
\end{lemma}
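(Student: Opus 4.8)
The plan is to fix an arbitrary $k\in I$, perform the single \lstinline|select|/\lstinline|case| synchronization that picks label $l_k$, and then recognize the leftover selections as exactly the term that the extended structural congruence rule collapses to $\INACT$. Expanding the definition, $\NDchoice\{P_i\}_{i\in I} = \NR{st}\bigl(\prod_{i\in I}\SELECT{s}{l_i}{\INACT}\PAR\TBRANCH{t}{l_i}{P_i}{i\in I}\bigr)$. First I would use the structural congruence rules (commutativity and associativity of $\PAR$, closed under \rstruct) to isolate the $k$-th selection, rewriting the body as $\SELECT{s}{l_k}{\INACT}\PAR\prod_{i\in I\setminus\{k\}}\SELECT{s}{l_i}{\INACT}\PAR\TBRANCH{t}{l_i}{P_i}{i\in I}$.

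Next I would apply the classical reduction rule for a \lstinline|select| synchronizing with a \lstinline|case| along label $l_k$ (rule \rselect), inside the restriction context (\rres) and the parallel context of the remaining selections (\rpar). The crucial point here is that the single branch $\TBRANCH{t}{l_i}{P_i}{i\in I}$ is \emph{one-shot}: even though the channel ends $s,t$ carry an $\un$ type (forced by the multiple parallel occurrences of $s$), each syntactic prefix fires exactly once, so after this one step the branch is consumed and yields only its selected continuation $P_k$, while the select that fired leaves $\INACT$. This produces the single-step reduction
\begin{equation*}
  \NDchoice\{P_i\}_{i\in I} \osred \NR{st}\Bigl(\INACT\PAR\prod_{i\in I\setminus\{k\}}\SELECT{s}{l_i}{\INACT}\PAR P_k\Bigr).
\end{equation*}
The remaining $|I|-1$ selections now have no branch to react with and are therefore inert; they constitute the garbage that the lemma calls $G$.

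To match the stated shape $P_k\PAR G$, I would then invoke scope extrusion: since $s$ and $t$ are bound by the freshly introduced $\NR{st}$ and are introduced purely to mediate the race, they do not occur free in the externally supplied continuation $P_k$, so $s,t\notin\fv(P_k)$ and $P_k$ escapes the restriction. Setting $G \eqdef \NR{st}\bigl(\prod_{i\in I\setminus\{k\}}\SELECT{s}{l_i}{\INACT}\bigr)$ and discarding the inert $\INACT$ by $\equiv$, the reduct is $\equiv P_k\PAR G$. Finally, $G$ is literally an instance of the left-hand side of the extended structural congruence rule $\NR{ab}\prod_{i\in I}\SELECT{a}{l_i}{\INACT}\asymp\INACT$ (taking $a,b := s,t$ and index set $I\setminus\{k\}$), whence $G\asymp\INACT$; since $\asymp$ contains $\equiv$ and is a congruence, $P_k\PAR G\asymp P_k\PAR\INACT\equiv P_k$, as required.

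I expect the only real obstacle to be the justification of the scope-extrusion step, namely the freshness claim $s,t\notin\fv(P_k)$; this follows from the hygiene convention implicit in the $\NDchoice$ construction, where $s,t$ are bound immediately and never shared with the branch continuations. Everything else is a direct application of the reduction rules together with the already-postulated extended congruence rule, so the argument stays short.
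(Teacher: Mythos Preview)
Your proposal is correct and follows essentially the same approach as the paper: perform the single \rcase reduction on the $k$-th selection (the paper's name for what you call \rselect), identify the residual as $G=\NR{st}\prod_{i\in I\setminus\{k\}}\SELECT{s}{l_i}{\INACT}$, and discharge it via the extended structural congruence rule. The paper's proof is a one-liner that simply names $G$ and asserts $G\asymp\INACT$; you have unpacked the structural-congruence rearrangement, the scope-extrusion step, and the freshness justification that the paper leaves implicit, but the argument is the same.
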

\begin{proof}
  $\NDchoice\{P_i\}_{i\in I} \osred P_k \PAR G$, where
  $G=\NR{st} \prod_{i\in I}^{i\neq k} \SELECT{s}{l_i}{\INACT}$ and
  $G \asymp \INACT$.
\end{proof}


We now turn our attention to \emph{barbs} and \emph{barb
  preservation}. We say that a typed classical session process $P$
\emph{has a barb in} $x$, notation $\isProc{\barb{x}}$, if
$\isProc{P}$ and
\begin{itemize}
\item either $P\equiv \NR{x_ny_n}\ldots\NR{x_1y_1} (x\OUT{v} Q \PAR R)$
  where $x\not \in \{x_i, y_i\}_{i=1}^n$
\item or $P\equiv \NR{x_ny_n}\ldots\NR{x_1y_1} (\SELECT{x}{l} Q \PAR R)$
  where $x\not \in \{x_i, y_i\}_{i=1}^n$.
\end{itemize}
On the other hand, we say that a typed mixed session process $P$
\emph{has a barb in} $x$, notation $\isProc{\barb{x}}$, if
$\isProc{P}$ and
$P\equiv \NR{x_ny_n}\ldots\NR{x_1y_1} (\CHOOSE qx{i\in I}{M_i} \PAR
R)$ where $x\not \in \{x_i, y_i\}_{i=1}^n$ and
$\isValue{x}{\CHOICE q\oplus {i\in I} {U_i}}$. Notice that only types
can discover barbs in processes since internal choice is
indistinguishable from external choice at the process level in
$\mathcal{M}_0$.

The processes with \emph{weak barbs} are those which reduce to a
barbed process: we say that a process $P$ \emph{has a weak barb in}
$x$, notation $\isProc{\wbarb{x}}$, if $P\transred P'$ and
$\isProc[\Gamma']{\barb[P']{x}}$.

The following theorem fulfills the \emph{barb preservation criterion}: if a
mixed process has a barb, its translation has a weak barb on the same
channel.
\begin{theorem}[Barb Preservation] The translation $\translation$
  preserves barbs, that is, if $\isProc{\barb{x}}$, then
  $\isProc[\rmap{\Gamma}]{} \wbarb[\rmap{\isProc[\Gamma]{P}}]{x}$.
\end{theorem}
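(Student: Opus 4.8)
The plan is to unfold the definition of a barb in the mixed process $P$ and then push the translation through its homomorphic behaviour on the enclosing restrictions and parallel components, thereby reducing the whole problem to the translation of a single internal choice on $x$. By definition of $\isProc{\barb{x}}$ we have $P \equiv \NR{x_ny_n}\ldots\NR{x_1y_1}(\CHOOSE qx{i\in I}{M_i} \PAR R)$ with $x\notin\{x_i,y_i\}$ and with $x$ typed by an internal choice $\CHOICE q\oplus{i\in I}{U_i}$. Since the translation is homomorphic for name restriction and parallel composition (Figure~\ref{fig-translation-mixed2classical}), $\rmap{\isProc P}$ is structurally congruent to $\NR{x_ny_n}\ldots\NR{x_1y_1}(\rmap{\isProc{\CHOOSE qx{i\in I}{M_i}}} \PAR \rmap{\isProc R})$, and every reduction I perform on the translated choice lifts to $\rmap{\isProc P}$ through the congruence rules \rres and \rpar. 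It therefore suffices to show that $\rmap{\isProc{\CHOOSE qx{i\in I}{M_i}}}$ reduces, in a bounded number of steps, to a classical process that carries a strong barb on $x$.

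I would then split on the qualifier $q$, using the fact that $x$ has internal choice ($\oplus$) type to pick out the applicable clause of the process translation. When $q=\lin$, the translated choice is an $\NDchoice$ of processes of the form $\SELECT{x}{l_i^!}{}$ and $\SELECT{x}{l_i^?}{}$; by Lemma~\ref{lem:ndchoice_red} it reduces in one step to $\SELECT{x}{l}{Q'} \PAR G$ with $G\asymp\INACT$, which already matches the $\lhd$-form required by the classical barb definition. When $q=\un$, the translation guards an $\NDchoice$ behind the loop $\NR{uv}(\SENDn u() \PAR \UN\RECEIVE v{\_}\cdots)$; I would first fire the communication on $uv$ that starts the loop (exposing a copy of the $\NDchoice$ body alongside the surviving replicated input), and then apply Lemma~\ref{lem:ndchoice_red} to obtain a branch of the form $\NR{ab}\SEND xa{\cdots} \PAR G$, that is, an output on $x$ under the fresh restriction $\NR{ab}$ with $x\neq a,b$, which matches the output form of a classical barb.

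In both cases the resulting process, read modulo $\equiv$ (absorbing the inert $G$ into the residual context allowed by the barb definition, or discarding it via $\asymp$), exhibits a strong barb on $x$. Chaining the above reductions under $\msred$, and lifting them across the outer restriction prefix $\NR{x_ny_n}\ldots\NR{x_1y_1}$ and the parallel component $\rmap{\isProc R}$, yields $\wbarb[\rmap{\isProc[\Gamma]{P}}]{x}$ as required.

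The step I expect to be the main obstacle is the unrestricted case: one must verify that the output on $x$ is genuinely exposed only after starting the loop and resolving the $\NDchoice$, and that the freshly bound witnesses $a,b$ cannot clash with $x$, so that the revealed $\SEND xa{}$ really satisfies the syntactic shape of a classical output barb once the restrictions $\NR{uv}$ and $\NR{ab}$ are folded into the restriction prefix. Checking that these intermediate reductions are licensed under the surrounding restriction and parallel context — rather than blocked by it — is the key bookkeeping; the remainder is a direct appeal to Lemma~\ref{lem:ndchoice_red} together with the homomorphic clauses of the translation.
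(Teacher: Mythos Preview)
Your proposal is correct and follows essentially the same approach as the paper: both proofs unfold the barb, push the translation through the homomorphic clauses for restriction and parallel, split on the qualifier of the $\oplus$-typed channel $x$, and then in the $\lin$ case resolve the outer $\NDchoice$ via Lemma~\ref{lem:ndchoice_red} to expose a $\SELECT{x}{\ell}{}$ prefix, while in the $\un$ case first fire the $uv$ loop-start and then the $\NDchoice$ to expose $\NR{ab}\SEND xa{}$. Your observation that the inert residue $G$ can simply be absorbed into the parallel component $R$ of the barb definition (rather than appealing to $\asymp$) is exactly how the paper handles it.
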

\begin{proof}
  An analysis of the translations of processes with barbs. In the case
  that $x$ is linear, rearranging the choice in $P$ in fragments, we
  obtain that
  $P\equiv \NR{x_ny_n}\ldots\NR{x_1y_1} (\LIN x\sum_{i\in I} (\fragr +
  \frags) \PAR R)$ and so its translation is
\begin{align*}\rmap{\isProc P}\equiv \NR{x_ny_n}\ldots\NR{x_1y_1}
  (\mathsf{NDChoice}\{ \enspace &
   \SELECT{x}{l_i^!}\NDchoice{\SEND{x}{v_{ij}}{\rmap{\isProc[\Gamma_3,
         x\colon T_i]{P_{ij}}}}}_{j\in J},
  \\
  & \SELECT{x}{l_i^?}\NDchoice{\RECEIVE{x}{y_{ik}}{\rmap{\isProc[\Gamma_2\csplit\Gamma_3,x\colon
          T'_i, y_{ik}\colon S'_i]{P'_{ik}}}}}_{k\in K}
          \}_{i\in I} \PAR \\
          &\rmap{\isProc[\Gamma'] R}).
\end{align*}

This process makes internal reduction steps in the resolution of the
outermost $\NDchoice$, non-deterministically choosing one of the
possible fragments, via Lemma~\ref{lem:ndchoice_red}.  However,
independently of which branch is chosen, they are all of the form
$\SELECT{x}{\ell}C$, which has a barb in $x$. That is:
$\rmap{\isProc P} \transred \NR{x_ny_n}\ldots\NR{x_1y_1}
(\SELECT{x}{\ell}C \PAR \rmap{\isProc[\Gamma'] R} \PAR G)$, which has a
barb in $x$. The $G$ term is the inert remainder of the $\NDchoice$
reduction.
In the unrestricted case, we have
$P\equiv \NR{x_ny_n}\ldots\NR{x_1y_1} (\UN x\sum_{i\in I} (\fragr +
\frags) \PAR R)$. The translation is
\begin{align*}\rmap{\isProc P}\equiv  \NR{x_ny_n}\ldots & \NR{x_1y_1}
  (\NR{uv}
  ( \SENDn{u}{()} \PAR \UN\RECEIVE{v}{\_} \mathsf{NDChoice}\{ & \\
& \NR{ab} \SEND{x}{a} \SELECT{b}{l_i^!}\NDchoice{\{\SEND{b}{v_{ij}}
(\SENDn{u}{()} \PAR {\rmap{\isProc[\Gamma_1
       ]{P_{ij}}})}\}}_{j\in J}, & \\
& \NR{ab} \SEND{x}{a} \SELECT{b}{l_i^?}
    \NDchoice{\{\RECEIVE{b}{y_{ik}}
    (\SENDn{u}{()} \PAR
    {\rmap{\isProc[\Gamma_1, y_{ik}\colon S_i']{P'_{ik}}})}\}}_{k\in K}
    \}_{i\in I} ) \PAR \rmap{\isProc[\Gamma_2] R}) .
\end{align*}
The process  starts by reducing via \runcom on the $u,v$ channels to the process
\begin{align*}\rmap{\isProc P}\transred  \NR{x_ny_n}\ldots & \NR{x_1y_1}
  \NR{uv}
  ( \mathsf{NDChoice}\{ & \\
& \NR{ab} \SEND{x}{a} \SELECT{b}{l_i^!}\NDchoice{\{\SEND{b}{v_{ij}}
(\SENDn{u}{()} \PAR {\rmap{\isProc[\Gamma_1
       ]{P_{ij}}})}\}}_{j\in J}, & \\
& \NR{ab} \SEND{x}{a} \SELECT{b}{l_i^?}
    \NDchoice{\{\RECEIVE{b}{y_{ik}}
    (\SENDn{u}{()} \PAR
    {\rmap{\isProc[\Gamma_1, y_{ik}\colon S_i']{P'_{ik}}})}\}}_{k\in K}
    \}_{i\in I} \PAR \rmap{\isProc[\Gamma_2] R} \PAR U)
\end{align*}
where $U$ is the persistent part of the unrestricted process.
This process, in turn, reduces via the $\NDchoice$ (Lemma~\ref{lem:ndchoice_red}) to one
of the possible branches which are all of the form $\NR{ab}x\OUT{a}C$,
\begin{align*}\rmap{\isProc P}\transred  \NR{x_ny_n}\ldots & \NR{x_1y_1}
  \NR{uv}
  \NR{ab} (\SEND{x}{a} C ) \PAR \rmap{\isProc[\Gamma'] R} \PAR U  \PAR G).
\end{align*}
Since $P$ has a barb in $x$, $x\not \in \{x_i, y_i\}_{i=1}^n$ and so this process also
has a barb in $x$, concluding that $\rmap{\isProc P}$ has indeed a weak barb in $x$.
\end{proof}


Finally, we look at operational completeness.
Operational completeness relates the behavior of mixed sessions
against their classical sessions images: any reduction step in mixed
sessions can be mimicked by a sequence of reductions steps in
classical sessions, modulo extended structural congruence. The ghost
reductions result from the new channels and communication inserted by
the translation, namely those due to the $\NDchoice$ and to the
encoding of ``loops'' for $\un$ mixed choices.

\begin{theorem}[Reduction Completeness]
\label{thm:reduction_completeness}
The translation $\translation$
  is \emph{operationally complete}, that is, if
  $P \osred_\mixedProcesses P'$, then
  $\rmap {\isProc P}
  \msred_\classicalProcesses\asymp_\classicalProcesses
  \rmap{\isProc{P'}}$,
\end{theorem}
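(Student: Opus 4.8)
The plan is to proceed by induction on the derivation of the reduction step $P \osred_\mixedProcesses P'$, following the structure of the reduction rules for $\mixedProcesses$. The interesting base cases are the communication rules on choice processes (the analogues of $\rlinlin$ and $\runun$, where a $\lin$ choice reduces against a matching $\lin$ choice and a $\un$ choice against a matching $\un$ choice); the remaining cases ($\rpar$, $\rres$, $\rift$, $\riff$, $\rstruct$) are handled by the induction hypothesis together with the compositional clauses of the translation in Figure~\ref{fig-translation-mixed2classical} (parallel composition, restriction, and the conditional all translate homomorphically). For $\rstruct$ I would additionally need that structurally congruent mixed processes have translations that are equated by $\asymp$, which follows from the homomorphic treatment of the congruence-generating operators.

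For the key linear communication case, I would take a mixed redex of the form $\NR{xy}(\LIN x\sum_{i\in I}(\fragr + \frags) \PAR \LIN y \dots)$ that reduces along some label $l_i$ and polarity pair $l_i^!/l_i^?$, selecting particular branches $j\in J$ and $k\in K$, to the mixed residual $P'$. First I would unfold both sides of the parallel composition using the $\lin\oplus$ and $\lin\&$ clauses of the translation: the internal-choice side becomes an outermost $\NDchoice$ of $\SELECT{x}{l_i^\star}\dots$ terms, while the external-choice side becomes a $x\rhd\{\dots\}$ branching (with inverted polarities, exactly as in the type encoding). Then I would exhibit the promised sequence of classical reductions: resolve the outermost $\NDchoice$ via Lemma~\ref{lem:ndchoice_red} to commit to one $\SELECT{x}{l_i^\star}$ branch (leaving an inert $G \asymp \INACT$); fire the classical $\rselect$ on $xy$ to synchronize the $\SELECT$ against the matching $\rhd$ branch; resolve the inner $\NDchoice$ (again by Lemma~\ref{lem:ndchoice_red}) that picks the particular duplicated branch; and finally fire the classical $\rsend$/$\rreceive$ on $xy$ to perform the actual value transfer. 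This reproduces precisely the $\labred{s_it_i}\labred{xy}\dots\labred{xy}$ schedule illustrated for the examples in Section~\ref{sec:embedding}. After discarding the inert $G$ terms via $\asymp$, the resulting classical process must be shown equal to $\rmap{\isProc{P'}}$, which amounts to checking that the continuations $\rmap{\isProc[\dots]{P_{ij}}}$ and $\rmap{\isProc[\dots]{P'_{ik}}}$ landed under the correct residual typing contexts and substitutions.

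For the unrestricted communication case I would argue analogously but additionally account for the loop encoding: the translation wraps each $\un$ choice in $\NR{uv}(\SENDn{u}{()} \PAR \UN\RECEIVE{v}{\_}\dots)$, so the classical side must first fire a $\runcom$ on $uv$ to ``start'' the loop and unfold the replicated input, exactly as carried out explicitly in the proof of the Barb Preservation theorem. After the interaction completes, the continuation calls $\SENDn{u}{()}$ reinstate the persistent process, matching the mixed residual in which the $\un$ choice survives reduction. Here I would rely on Theorem~(Type Soundness)\ref{item:p} to guarantee that all the intermediate classical processes are well typed, so that the reductions I schedule are legitimate.

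The main obstacle I expect is the final matching step: verifying that after the scheduled classical reductions and the removal of inert $\NDchoice$ remnants by $\asymp$, the process obtained is \emph{syntactically} (up to $\asymp$) the translation $\rmap{\isProc{P'}}$ of the mixed residual, rather than merely behaviorally similar to it. This requires careful bookkeeping of how the typing context $\Gamma = \Gamma_1\csplit\Gamma_2\csplit\Gamma_3$ is split and reassembled across the communication, of how the transmitted value $v_{ij}$ is substituted, and—in the $\un$ case—of how the freshly generated loop-control channels $uv$ and the intermediate channels $ab$ are cleaned up or reabsorbed by extended structural congruence. Getting the residual contexts to coincide exactly with those prescribed by the translation clauses, so that the two translated continuations agree on the nose, is where the bulk of the technical care lies; the choice of $\asymp$ (extending $\equiv$ with the collapse of leftover $\NR{ab}\prod\SELECT{a}{l_i}\INACT$ processes) is precisely what makes this matching go through.
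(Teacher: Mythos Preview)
Your proposal is correct and follows essentially the same approach as the paper: rule induction on the mixed reduction derivation, with the compositional cases (\rpar, \rres, \rstruct, \rift, \riff) handled homomorphically and the communication cases handled by exhibiting the explicit schedule of classical reductions (outer $\NDchoice$ resolution via Lemma~\ref{lem:ndchoice_red}, $\rcase$ on $xy$, inner $\NDchoice$ resolutions, then value transfer), with leftover selects absorbed by $\asymp$. One minor imprecision worth noting: after the $\rcase$ synchronization on $xy$ there are \emph{two} inner $\NDchoice$ processes to resolve (one on each side of the parallel), not one, as the paper's explicit $\rlinlin$ calculation makes clear; and your appeal to Type Soundness for the legitimacy of the scheduled reductions is unnecessary, since the classical reduction relation is untyped.
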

\begin{proof}
  By rule induction on the derivation of
  $P \osred_\mixedProcesses P'$. We detail two cases.

  Case \rpar. We can show that
  %
    if $Q_1 \msred_\classicalProcesses Q_1'$, then
    $Q_1\PAR Q_2 \msred_\classicalProcesses Q_1' \PAR Q_2$,
  by induction on the length of the reduction.  Then we have
  $\rmap {\isProc[\Gamma] {P_1 \PAR P_2} } = \rmap {\isProc[\Gamma_1]
    { P_1 } } \PAR \rmap {\isProc[\Gamma_2] { P_2 } }$ with
  $\Gamma = \Gamma_1\csplit \Gamma_2$.  By induction we have
  $\rmap {\isProc[\Gamma_1]{P_1}} \msred_\classicalProcesses
  Q\asymp_\classicalProcesses \rmap{\isProc[\Gamma_1]{P_1'}}$.  Using
  the above result and the fact that $\asymp_\classicalProcesses$ is a
  congruence, we get
  $\rmap {\isProc[\Gamma_1] { P_1 } } \PAR \rmap {\isProc[\Gamma_2] {
      P_2 } } \msred_\classicalProcesses Q \PAR \rmap
  {\isProc[\Gamma_2] { P_2 }} \asymp_\classicalProcesses
  \rmap{\isProc[\Gamma_1]{P_1'}} \PAR \rmap {\isProc[\Gamma_2] {P_2}}
  = \rmap {\isProc{P'_1 \PAR P_2}}$.
  The cases for \rres and \rstruct are similar.

  \begin{sloppypar}
    Case \rlinlin. Let
    $\Gamma, x\colon R,y\colon S = \Gamma' \csplit \Gamma'' \csplit
    \Gamma'''$ and
    $\isValue[\Gamma']{x}{\CHOICE{\lin}{\&}{}{\BRANCH{l}{!}{T_0}{R_0},\ldots}}$
    and
    $\isValue[\Gamma'']{y}{\CHOICE{\lin}{\oplus}{}{\BRANCH{l}{?}{U_0}{S_0},
        \ldots}}$, with $\areEquiv{T_0}{U_0}$ and
    $\areDual{R_0}{S_0}$.
    Let $\Gamma' = \Gamma_1' \csplit \Gamma_2' \csplit \Gamma_3'$ and
    $\Gamma'' =\Gamma_1'' \csplit \Gamma_2'' \csplit \Gamma_3''$.
    We have:
  \end{sloppypar}
  \begin{align*}
  	&\rmap{\isProc[\Gamma]{\NR{xy}(\lin x(\BRANCHP {l}!v{P} + M) \PAR \lin y(\BRANCHP {l}?z{Q} + N) \PAR O)}}
\\
  	=&  \NR{xy}(
  		\TBRANCH{x}{l^?}{\NDchoice\{ \SEND{x}{v}\rmap{\isProc[\Gamma_3',x:R_0]P}, \ldots\}, \ldots}{} \PAR\\
  		&\qquad \enspace\NDchoice\{\SELECT{y}{l^?}\NDchoice\{\RECEIVE{y}{z}\rmap{\isProc[(\Gamma_2''\csplit \Gamma_3'', y:S_0, z:U_0)]Q},
  	 	\ldots\}, \ldots\}  	 \PAR
      \rmap{\isProc[\Gamma''']{O}})\\
      \osred \asymp & \NR{xy}(
  		\TBRANCH{x}{l^?}{\NDchoice\{ \SEND{x}{v}\rmap{\isProc[\Gamma_3',x:R_0]P},  \ldots\}, \ldots}{} \PAR\\
  		&\qquad \enspace\SELECT{y}{l^?}\NDchoice\{\RECEIVE{y}{z}\rmap{\isProc[(\Gamma_2''\csplit \Gamma_3'', y:S_0, z:U_0)]Q},
      \ldots\}  	  \PAR
  		\rmap{\isProc[\Gamma''']{O}})\\
      \osred & \NR{xy}(
  		{\NDchoice\{ \SEND{x}{v}\rmap{\isProc[\Gamma_3',x:R_0]P}, \ldots\}} \PAR\\
  		&\qquad \enspace\NDchoice\{\RECEIVE{y}{z}\rmap{\isProc[(\Gamma_2''\csplit \Gamma_3'', y:S_0, z:U_0)]Q},
      \ldots\}  	  \PAR
  		\rmap{\isProc[\Gamma''']{O}})
\\
  	\osred \osred\asymp & \NR{xy}(
  		{\SEND{x}{v}\rmap{\isProc[\Gamma_3',x:R_0]P}}  \PAR
  		\RECEIVE{y}{z}\rmap{\isProc[(\Gamma_2''\csplit \Gamma_3'', y:S_0, z:U_0)]Q}
  	 	  	  \PAR
  		\rmap{\isProc[\Gamma''']{O}})
    \end{align*}
    \begin{align*}
  	 \osred & \NR{xy}(
  		{\rmap{\isProc[\Gamma_3',x:R_0]P}}  \PAR
  		\rmap{\isProc[\Gamma_2''\csplit \Gamma_3'', y:S_0, z:U_0]Q }\subs vz
  	 	  	  \PAR
  		\rmap{\isProc[\Gamma''']{O}})
\\
  	  = & \NR{xy}(
  		{\rmap{\isProc[\Gamma_3',x:R_0]P}}  \PAR
  		\rmap{\isProc[\Gamma_2'\csplit \Gamma_2''\csplit \Gamma_3'', y:S_0]Q\subs vz }
  	 	  	 \PAR
  		\rmap{\isProc[\Gamma''']{O}})
\\
   	= &	\rmap{  \isProc[\Gamma_2'\csplit \Gamma_3'\csplit\Gamma_2''\csplit \Gamma_3''\csplit \Gamma''']
  		{\NR{xy} (P  \PAR Q\subs vz \PAR O)} }
\\
   	= &	\rmap{  \isProc {\NR{xy} (P  \PAR Q\subs vz \PAR O)} }
  \end{align*}
  %
  Notice that $\Gamma_1' = \Delta_1,x\colon R$
  where $\Delta_1$ is $\un$, hence $\Delta_1$ is in $\Gamma_2'$ and in
  $\Gamma'_3$. The same reasoning applies to $\Gamma_1'' $. Since
  context $\Gamma_2'$ is used to type $v$, the substitution
  lemma~\cite{DBLP:journals/iandc/Vasconcelos12} reintroduces it in
  the context for $Q\subs vz$.

  The case for \runun is similar, albeit more verbose.  The cases for
  \rift and \riff are direct.
\end{proof}

We can show that the translation does \emph{not} enjoy reduction
soundness. Consider the classical process $Q$ to be the encoding of
process $P$ of the form $\un y(\BRANCHP m ? z \INACT)$, described in
the right part of Figure~\ref{fig:example3_translation}. Soundness
requires that if $Q \osred_\classicalProcesses Q'$, then
$P \msred_\mixedProcesses P'$ and
$Q \msred_\classicalProcesses \asymp_\classicalProcesses
\rmap{\isProc{P'}}$. Clearly, $Q$ has an initial reduction step (on
channel $u_2v_2$), which cannot be mimicked by $P$.  But this
reduction is a transition internal to process~$Q$, a $\tau$
transition. Equipped with a suitable notion of labelled transition
systems on both languages that include $\tau$ transitions, and by
using a weak bisimulation that ignores such transitions, we expect
soundness to hold.


\section{Further Work}
\label{sec:conclusion}

There are two avenues that may be followed. One extends the encoding
to the full language of mixed sessions, by taking into consideration
the axioms in the reduction relation that match $\lin$ choices against
$\un$ choices. The other pursues semantic
preservation~\cite{DBLP:journals/iandc/KouzapasPY19} by establishing a
full abstraction result, requiring the development of typed
equivalences for the two languages.


\paragraph{Acknowledgements}
This work was supported by FCT through the LASIGE Research Unit,  ref.\
UIDB/ 00408/2020, and by Cost Action CA15123 EUTypes.

\bibliographystyle{eptcs}
\bibliography{biblio}

\appendix

\section{The Syntax, Operational Semantics, and Type System of Mixed
  and Classical Sessions}

\paragraph{Mixed Sessions}

\begin{figure}[!t]
    \emph{Mixed syntactic forms}
    \begin{align*}
        v \grmeq & & \text{Values:}\\
        & x  & \text{variable}\\
        & \truek \grmor \falsek & \text{boolean values}\\
        & () & \text{unit value}\\
        P  \grmeq  & & \text{Processes:}\\
    & \CHOOSE qx{i\in I}{M_i} & \text{choice}\\
        & P\PAR P & \text{parallel composition}\\
        & \NR{xx}P & \text{scope restriction}\\
        & \IF vPP & \text{conditional}\\
        & \INACT & \text{inaction}\\
       %
       %
       M \grmeq & & \text{Branches:}\\
       & \BRANCHP l \star vP & \text{branch}\\
        \star \grmeq & & \text{Polarities:}\\
        & ! \grmor ?  & \text{out and in}\\
        q  \grmeq & & \text{Qualifiers:}\\
        & \lin & \text{linear}\\
        & \un & \text{unrestricted}
     \end{align*}
     \emph{Structural congruence}, $P \equiv P$
  \begin{gather*}
   P\PAR Q \equiv Q\PAR P
    \qquad
    (P\PAR Q)\PAR R \equiv P\PAR (Q\PAR R)
    \qquad
    P\PAR \INACT\equiv P
   \\
    \NR{xy} P\PAR Q \equiv \NR{xy}(P\PAR Q)
   \qquad
   \NR{xy}\INACT \equiv \INACT
   \qquad
   \NR{{wx}}\NR{yz}P\equiv\NR{{yz}}\NR{wx}P
 \end{gather*}
   \emph{Mixed reduction rules}, $P \osred P$
  \begin{gather*}
   \tag*{\rift\riff}
   \IF{\truek}{P}{Q} \osred P
   \qquad
    \IF{\falsek}{P}{Q} \osred Q
   \\
   \tag*\rlinlin
    \NR{xy}(\lin x(\BRANCHP l!vP + M) \PAR \lin y(\BRANCHP l?zQ + N) \PAR R)
    \osred
    \NR{xy}(P \PAR Q\subs v z \PAR R)
   \\
   \tag*\rlinun
    \NR{xy}(\lin x(\BRANCHP l!vP + M) \PAR \un y(\BRANCHP l?zQ + N) \PAR R)
    \osred
    \NR{xy}(P \PAR Q\subs v z \PAR \un y(\BRANCHP l?zQ + N) \PAR R)
   \\
   \tag*\runlin
    \NR{xy}(\un x(\BRANCHP l!vP + M) \PAR \lin y(\BRANCHP l?zQ + N) \PAR R)
    \osred
    \NR{xy}(P \PAR Q\subs v z \PAR \un x(\BRANCHP l!vP + M) \PAR R)
   \\
   \tag*\runun
    \NR{xy}(\un x(\BRANCHP l!vP + M) \PAR \un y(\BRANCHP l?zQ + N) \PAR R)
    \osred\\
    \qquad\qquad\qquad
    \NR{xy}(P \PAR Q\subs v z \PAR \un x(\BRANCHP l!vP + M) \PAR
    \un y(\BRANCHP l?zQ + N) \PAR R)
   \\
   \tag*{\rres\rpar \rstruct}
    \frac{
      P \osred  Q
    }{
      \NR{xy}P \osred \NR{xy}Q
    }
    \qquad
    \frac{
      P \osred  Q
    }{
      P \PAR R \osred Q \PAR R
    }
    \qquad
    \frac{
      P \equiv P' \qquad P' \osred Q' \qquad Q' \equiv Q
    }{
      P \osred Q
    }
  \end{gather*}

     \caption{Mixed session types: process syntax and reduction}
     \label{fig:mixed-sessions}
   \end{figure}


\begin{figure}[!t]
  \begin{align*}
    T \grmeq & & \text{Types:}\\
             & \CHOICE q\sharp {i\in I} {U_i} & \text{choice}\\
             & \End & \text{termination}\\
             & \unit \grmor \bool & \text{unit and boolean}\\
             & \rect & \text{recursive type}\\
             & a & \text{type variable}\\
    U \grmeq & & \text{Branches:}\\
             & \BRANCH l\star TT & \text{branch}\\
    \sharp \grmeq & & \text{Views:}\\
             & \oplus \grmor \& & \text{internal and external}\\
    \Gamma \grmeq & & \text{Contexts:}\\
             & \Empty & \text{empty}\\
             & \Gamma, x\colon T & \text{entry}
  \end{align*}
  \emph{The \un predicate}, $\UN T$, $\UN\Gamma$
  \begin{equation*}
    \UN(\CHOICE \un\sharp {i\in I} {U_i})
    \quad
    \UN(\rect) \text{ if } \UN T
    \quad
    \UN(\End,\unit,\bool)
    \qquad\qquad
    \UN\Empty
    \quad
    \UN(\Gamma,x\colon T) \text{ if } \UN\Gamma \wedge \UN T
  \end{equation*}
  \emph{Branch subtyping}, $\isSubt UU$
  \begin{gather*}
    \frac{
      \isSubt{S_2}{S_1}
      \quad
      \isSubt{T_1}{T_2}
    }{
      \isSubt{\BRANCH l!{S_1}{T_1}}{\BRANCH l!{S_2}{T_2}}
    }
    \qquad
    \frac{
      \isSubt{S_1}{S_2}
      \quad
      \isSubt{T_1}{T_2}
    }{
      \isSubt{\BRANCH l?{S_1}{T_1}}{\BRANCH l?{S_2}{T_2}}
    }
  \end{gather*}
  \emph{Coinductive subtyping rules}, $\isSubt TT$
  \begin{gather*}
    \frac{}{\isSubt \End \End}
    \qquad
    \frac{}{\isSubt \unit \unit}
    \qquad
    \frac{}{\isSubt \bool \bool}
    \qquad
    \frac{
      \isSubt[\Theta, (\REC aS, T)] {S\subs{\REC aS}{a}} T
    }{
      \isSubt{\REC aS}T
    }
    \qquad
    \frac{
      \isSubt[\Theta, (S, \REC aT)] S {T\subs{\REC aT}{a}}
    }{
      \isSubt S {\REC aT}
    }
    \\
    \frac{
      J\subseteq I
      \qquad
      \isSubt {U_j}{V_j}
    }{
      \isSubt {\CHOICE q\oplus{i\in I}{U_i}}{\CHOICE q\oplus{j\in J}{V_j}}
    }
    \qquad
    \frac{
      I\subseteq J
      \qquad
      \isSubt {U_i}{V_i}
    }{
      \isSubt {\CHOICE q\&{i\in I}{U_i}}{\CHOICE q\&{j\in J}{V_j}}
    }
  \end{gather*}
  \emph{Polarity duality and view duality}, $\areDualP\sharp\sharp$
  and $\areDualP\star\star$
  \begin{equation*}
    \areDualP !?
    \qquad
    \areDualP ?!
    \qquad
    \qquad
    \areDualP \oplus \&
    \qquad
    \areDualP \& \oplus
  \end{equation*}
  \emph{Coinductive type duality rules}, $\areDual TT$
  \begin{gather*}
    \frac{}{\areDual\End\End}
    \qquad
    \frac{
      \sharp\, \bot\, \flat
      \qquad
      \star_i \bot \bullet_i
      \qquad
      \areEquiv[\Empty] {S_i}{S'_i}
      \qquad
      \areDual {T_i}{T'_i}
    }{
      \areDual {\CHOICE q{\sharp}{i\in I}{\BRANCH{l_i}{\star}{S_i}{T_i}}}{\CHOICE q{\flat}{i\in I}{\BRANCH{l_i}{\bullet}{S'_i}{T'_i}}}
    }
    \\
    \frac{
    \areDual[\Theta, (\REC aS, T)] {S\subs{\REC aS}{a}} T
    }{
    \areDual{\REC aS}T
    }
    \qquad
    \frac{
    \areDual[\Theta, (S, \REC aT)] S {T\subs{\REC aT}{a}}
    }{
    \areDual S {\REC aT}
    }
  \end{gather*}
  \caption{Mixed session types: types syntax, subtyping, and duality}
  \label{fig:mixed-sessions2}
\end{figure}


\begin{figure}[!t]
    \emph{\un and \lin predicates}, $\un(T)$, $\lin(T)$
  \begin{equation*}
    \un(\End)
    \quad
    \un(\unit)
    \quad
    \un(\bool)
    \quad
    \un(\CHOICE \un\sharp{}{U_i})
    \quad
    \frac{\un(T)}{\un(\mu a.T)}
    \qquad\qquad
    \frac{}{\lin(T)}
  \end{equation*}
  \emph{Context split}, $\Gamma = \Gamma \csplit \Gamma$
  \begin{gather*}
    \Empty = \Empty \csplit \Empty
    \qquad \qquad
    \frac{
      \Gamma_1 \csplit \Gamma_2 = \Gamma
      \qquad
      \un(T)
   }{
      \Gamma, x\colon T = (\Gamma_1,x\colon T) \csplit
      (\Gamma_2,x\colon T)
    }
    \\
    \frac{
      \Gamma = \Gamma_1 \csplit \Gamma_2
    }{
      \Gamma, x\colon \lin\,p = (\Gamma_1,x\colon \lin\,p) \csplit \Gamma_2
    }
    \qquad \qquad
    \frac{
      \Gamma = \Gamma_1 \csplit \Gamma_2
  }{
      \Gamma, x\colon \lin\,p = \Gamma_1 \csplit (\Gamma_2,x\colon \lin\,p)
    }
  \end{gather*}
  \emph{Context update}, $\Gamma + x\colon T = \Gamma$
  \begin{equation*}
    \frac{
      x\colon U \notin \Gamma
    }{
      \Gamma + x \colon T = \Gamma, x \colon T
    }
    \qquad
    \frac{
      \un(T) \qquad \areEquiv{T}{U}
    }{
      (\Gamma, x\colon T) + x \colon U = (\Gamma, x\colon T)
    }
  \end{equation*}
  \emph{Typing rules for values}, $\isValue vT$
  \begin{gather*}
    \tag*{\tunit\ttrue\tfalse\tvar\tsubt}
    \frac{
      \un(\Gamma)
    }{
      \isValue {()} \unit
    }
    \quad\;\;
    \frac{
      \un(\Gamma)
    }{
      \isValue {\truek, \falsek} \bool
    }
    \quad\;\;
    \frac{
      \un(\Gamma_1,\Gamma_2)
    }{
      \isValue[\Gamma_1,x\colon T,\Gamma_2] x T
    }
    \qquad
    \frac{
      \isValue vS
      \quad
      \isSubt[\Empty] ST
    }{
      \isValue vT
    }
  \end{gather*}
  \emph{Typing rules for branches}, $\isBranch MU$
  \begin{gather*}
    \tag*{\tout\tin}
    \frac{
      \isValue[\Gamma_1] vS
      \qquad
      \isProc[\Gamma_2] P
    }{
      \isBranch[\Gamma_1 \csplit \Gamma_2] {\BRANCHP l!vP} {\BRANCH l!ST}
    }
    \qquad
    \frac{
      \isProc[\Gamma,x\colon S] P
    }{
      \isBranch {\BRANCHP l?xP} {\BRANCH l?ST}
    }
  \end{gather*}
  \emph{Typing rules for processes}, $\isProc P$
  \begin{gather*}
    \tag*{\tchoice}
    \frac{
      q_1(\Gamma_1 \!\csplit\! \Gamma_2)
      \;\;\;
      \isValue[\Gamma_1]{x}{\CHOICE{q_2}{\sharp}{i\in I}{\BRANCH{l_i}{\star}{S_i}{T_i}}}
      \;\;\;
      \isBranch[\Gamma_2 + x\colon T_j] {\BRANCHP{l_j}{\star}{v_j}{P_j}}{\BRANCH{l_j}{\star}{S_j}{T_j}}
      \;\;\;
      \{l_j^\star\}_{j \in J} = \{l_i^\star\}_{i \in I}
    }{
      \isProc[\Gamma_1 \csplit \Gamma_2]{\CHOOSE {q_1}x{j\in J}{\BRANCHP{l_j}{\star}{v_j}{P_j}}}
    }
    \\
    \tag*{\tinact\tpar\tif\tres}
    \frac{
      \un(\Gamma)
    }{
      \isProc \INACT
    }
    \qquad
    \frac{
      \isProc[\Gamma_1] P
      \quad
      \isProc[\Gamma_2] Q
    }{
      \isProc[\Gamma_1 \csplit \Gamma_2] {P \PAR Q}
    }
    \qquad
    \frac{
      \Gamma_1 \vdash v\colon \bool
      \quad
      \Gamma_2 \vdash P
      \quad
      \Gamma_2 \vdash Q
    }{
      \Gamma_1 \csplit \Gamma_2 \vdash \ifp
    }
    \qquad
    \frac{
      \areDual[\Empty] ST
      \quad
      \isProc[\Gamma,x\colon S,y\colon T] P
    }{
      \isProc{\NR{xy}P}
    }
  \end{gather*}
  \caption{Mixed session types: \un and \lin predicates, context split
  and update, and typing}
  \label{fig:mixed-sessions3}
\end{figure}


The syntax of process and the operational semantics are in
Figure~\ref{fig:mixed-sessions}.
The syntax of types, and the notions of subtyping and type duality are
in Figure~\ref{fig:mixed-sessions2}.
The \un and \lin predicates, the context split and update operations,
and the typing rules are in Figure~\ref{fig:mixed-sessions3}.

\paragraph{Classical Sessions}

\begin{figure}[!t]
 \emph{Syntactic forms}
 \begin{align*}
    P  \grmeq  & \dots & \text{Processes:}\\
    & \SEND xvP & \text{output}\\
    & q\RECEIVE xxP & \text{input}\\
    & \selp & \text{selection}\\
    & \branchp & \text{branching}\\
    T \grmeq & \dots & \text{Types:}\\
    & q\star T.T & \text{communication}\\
    & q\sharp\{l_i\colon T_i\}_{i\in I} & \text{choice}
  \end{align*}
  \emph{Reduction rules}, $P \osred P$, (plus \rres\rpar
  \rstruct from Figure~\ref{fig:mixed-sessions})
  \begin{gather*}
    \tag*\rlincom
    \NR{xy}(\SEND x v P \PAR \lin\,\RECEIVE y z Q \PAR R)
    \osred
    \NR{xy}(P \PAR Q\subs v z \PAR R)
    \\
    \tag*\runcom
    \NR{xy}(\SEND x v P \PAR \un\,\RECEIVE y z Q \PAR R)
    \osred
    \NR{xy}(P \PAR Q\subs v z \PAR \un\,\RECEIVE yzQ \PAR R)
    \\
    \tag*\rcase
    \frac{
      j\in I
    }{
      \NR{xy}(\SELECT{x}{l_j}{P} \PAR \TBRANCH{y}{l_i}{Q_i}{i\in I} \PAR
      R)
      \osred
      \NR{xy}(P \PAR Q_j \PAR R)
    }
\end{gather*}
  \emph{Subtyping rules}, $\isSubt TT$
  \begin{gather*}
    \frac{
      \isSubt TS
      \qquad
      \isSubt{S'}{T'}
    }{
      \isSubt{q!S.S'}{q!T.T'}
    }
    \qquad
    \frac{
      \isSubt ST
      \qquad
      \isSubt{S'}{T'}
    }{
      \isSubt{q?S.S'}{q?T.T'}
    }
    \\
    \frac{
      J\subseteq I
      \qquad
      \isSubt{S_j}{T_j}
    }{
      \isSubt{q\oplus\{l_i\colon S_i\}_{i\in I}}{q\oplus\{l_j\colon T_j\}_{j\in J}}
    }
    \qquad
    \frac{
      I\subseteq J
      \qquad
      \isSubt{S_i}{T_i}
    }{
      \isSubt{q\&\{l_i\colon S_i\}_{i\in I}}{q\&\{l_j\colon T_j\}_{j\in J}}
    }
  \end{gather*}
  %
  \emph{Type duality rules}, $\areDual TT$
  \begin{gather*}
    \frac{
      \isSubt[\Empty]ST
      \qquad
      \isSubt[\Empty]TS
      \qquad
      \areDual{S'}{T'}
    }{
      \areDual{q?S.S'}{q!T.T'}
    }
    \qquad
    \frac{
      \areDual{S_i}{T_i}
    }{
      \areDual{q\oplus\{l_i\colon S_i\}_{i\in I}}{q\&\{l_i\colon T_i\}_{i\in I}}
    }
  \end{gather*}
  \emph{Typing rules}, $\isProc P$, (plus \tinact\tpar\tres from Figure~\ref{fig:mixed-sessions2})
  \begin{gather*}
    \tag*{\ttout}
    \frac{
      \Gamma_1 \vdash x\colon q\,\OUT TU
      \qquad
      \Gamma_2 \vdash  v \colon T
      \qquad
      \Gamma_3 \cupdate x\colon U \vdash P
    }{
      \Gamma_1\csplit\Gamma_2\csplit\Gamma_3 \vdash \sendp
    }
    \\
    \tag*{\ttin}
    \frac{
      q_1(\Gamma_1 \csplit \Gamma_2)
      \qquad
      \Gamma_1 \vdash x\colon q_2\IN TU
      \qquad
      (\Gamma_2 \cupdate x\colon U), y\colon T \vdash P
    }{
      \Gamma_1 \csplit \Gamma_2 \vdash q_1\RECEIVE xyP
    }
    \\
        \tag*{\tbranch}
    \frac{
      \Gamma_1 \vdash x\colon q\brancht
      \qquad
      \Gamma_2\cupdate x\colon T_i \vdash P_i
      \qquad \forall i\in I
    }{
      \Gamma_1\csplit\Gamma_2 \vdash \branchp
    }
    \\
    \tag*{\tselect}
    \frac{
      \Gamma_1 \vdash x\colon q\selectt
      \qquad
      \Gamma_2 \cupdate x\colon T_j \vdash P
      \qquad j \in I
    }{
      \Gamma_1\csplit\Gamma_2 \vdash \SELECT{x}{l_j}{P}
    }
  \end{gather*}
  \caption{Classical session types}
  \label{fig:classical-sessions}
\end{figure}


The syntax, operational semantics, and type system are in
Figure~\ref{fig:classical-sessions}.



\end{document}
